\newtheorem{theorem}{Theorem}
\newtheorem{lemma}[theorem]{Lemma}
\begin{document}

%
\title{Multi-hop Relaying with Mixed Half and Full Duplex Relays for Offloading to MEC}
\author{\IEEEauthorblockN{Pavel Mach$^1$, Zdenek Becvar$^1$, Mohammadsaleh Nikooroo$^1$
} 
\IEEEauthorblockA{
$^1$Czech Technical University in Prague, Czech republic, emails: \{machp2, zdenek.becvar\}@fel.cvut.cz}

}


%


\maketitle

\vspace{-1cm}
\begin{abstract}
In this paper, we focus on offloading a computing task from a user equipment (UE) to a multi-access edge computing (MEC) server via multi-hop relaying. We assume a general relaying case where relays are energy-constrained devices, such as other UEs, internet of things (IoT) devices, or unmanned aerial vehicles. To this end, we formulate the problem as a minimization of the sum energy consumed by the energy-constrained devices under the constraint on the maximum requested time of the task processing. Then, we propose a multi-hop relaying combining half and full duplexes at each individual relay involved in the offloading. We proof that the proposed multi-hop relaying is convex, thus it can be optimized by conventional convex optimization methods. We show our proposal outperforms existing multi-hop relaying schemes in terms of probability that tasks are processed within required time by up to 38\% and, at the same time, decreases energy consumption by up to 28\%.
\end{abstract}

\begin{IEEEkeywords}
offloading, MEC, half/full duplex relaying.
\end{IEEEkeywords}

\IEEEpeerreviewmaketitle

\section{Introduction}\label{Intro}
The multi-access edge computing (MEC) introduces a concept of offloading computationally demanding tasks from the energy-constrained user equipment (UE) to the MEC server located at the edge of mobile network \cite{Mach17_COMST}. Hence, the task processing delay and/or energy consumption of the UE can be reduced \cite{Fang2022_IoT}. 

Benefits facilitated by MEC can be further augmented by a relaying of the tasks from the UE to the MEC servers via intermediate relay(s). The exploitation of neighboring UEs as relays and, thus, capitalizing on device-to-device (D2D) relaying concept \cite{Mach2022}, helps to minimize the task processing delay \cite{2021Liu_TVT} or increase the number of tasks completed within a required time \cite{2021Peng_TWC}. Moreover, an adoption of unmanned aerial vehicles (UAVs) acting as the relays can improve quality of experience to the UEs \cite{Zheng2020_GBC} or minimize their energy consumption \cite{Diao2021_TVT}. Besides, the use of vehicles as the relays is considered in \cite{Zhang2021} to ensure a reliable offloading from the vehicles in the area without coverage of the MEC servers. \textcolor{black}{Last, also intelligent reflecting surface (IRS) can assist in offloading of tasks to extend coverage \cite{Chen2023}.} 

All above-mentioned works assume only two-hop relaying, i.e., only one relay is used in the offloading process. To fully grab the potential of the relays, multi-hop relaying for the offloading purposes has recently drawn an attention from researches. The multi-hop relaying is addressed from a perspective of balancing the load among MEC servers \cite{2023_TVT}, minimizing the processing delay of the tasks offloaded from the vehicles to the MEC servers \cite{2020_VehCom}-\cite{2023_arxiv} or to other computing vehicles \cite{2023_IoT}\cite{2023_TITS}, or to offload the tasks from one UE to other neighboring computing UEs \cite{2023_TNSM}.  

The primary objective of all existing studies on the offloading with multi-hop relaying is to find a proper route between the offloading UE and the MEC server or other computing UE. All works but \cite{2023_TNSM} assume only less efficient half-duplex (HD) mode adopted at each relay with the task subsequently offloaded over each hop in individual time intervals, thus, increasing communication delay. The paper \cite{2023_TNSM} considers full-duplex (FD) mode, however, the paper fully disregards the problem of self-interference (SI) with which the FD is inevitably plagued \cite{Wang2018_JSAC}. Moreover, none of the existing works optimize multi-hop relaying in terms of radio resource management including \emph{i}) allocation of time slots at each hop, \emph{ii}) allocation of transmission power of the offloading UE as well as relays, and \emph{iii}) allocation of bandwidth at each hop.  

Motivated by the above-mentioned gaps, the objective of this paper is to optimize radio resource management aspects of multi-hop relaying for the task offloading. Since the offloading UE and relays are usually energy-constrained, such as smartphones, UAVs, or internet of things (IoT) devices, we formulate the problem as the minimization of the sum energy consumed by the energy-constrained UEs involved in the multi-hop relaying under the constraint on the maximum processing time of the computing tasks. First, we propose several unique relaying cases combining HD and FD at each relay involved in multi-hop relaying. Note that existing works always assume the same relaying mode at all relays. Second, we adapt the general problem for each multi-hop relaying case and we prove its convexity so that we can solve it in an optimal way. Finally, we demonstrate that the proposal increases the probability of the tasks being processed within required time by up to 38\% and, at the same time, decreases energy consumption by up to 28\% with respect to state-of-the-art works.

\section{System Model}\label{II}
This section first describes the network model. Then, communication and computing models are introduced.

\subsection{Network model}\label{IIa}
We contemplate a scenario with one powerful MEC server located, for example, at the base station (BS). Further, we assume one UE generating highly computationally demanding tasks  offloaded to the MEC servers via multi-hop relaying. We consider the end-to-end relaying link is already established while leaving the joint optimization of relays selection and multiple UEs scenario for future work. In our work, the relays can be any energy-constrained devices, such as smartphones, IoT devices, UAVs, or vehicles. Moreover, we envisage that different types of relays can be exploited at each hop, e.g., the smartphone can be used as the first relay while vehicle or UAV as other relay.  

\subsection{Communication and relaying models}\label{IIb}
We limit our scenario to two relays (labeled as R1 and R2 in Fig. \ref{Fig01}), since this number is sufficient to illustrate benefits of the multi-hop relaying in the offloading while fitting page limit and avoiding cluttering of text and derivations. Still, all the math derivations and proposed relaying principles can be extended to more than two relays as well. \textcolor{black}{We assume single antenna devices as this configuration gives enough insight on the benefits of proposed multi-hop offloading.} 

We consider that each relay can adopt one of the three relaying modes: \emph{i}) \emph{HD}, \emph{ii}) FD with orthogonal bandwidth at each hop (labeled as \emph{FD--Orthogonal}), and \emph{iii}) FD with the same bandwidth utilized at both hops (labeled as \emph{FD--Shared}). All these modes are described in details in the following subsections.
 
\subsubsection{HD}
In case of HD relaying, the task is first sent over one hop and, then, the same task is relayed over the next hop (see Fig. \ref{Fig01}).  
\begin{figure}[t!]
\centering
\includegraphics[scale=0.75]{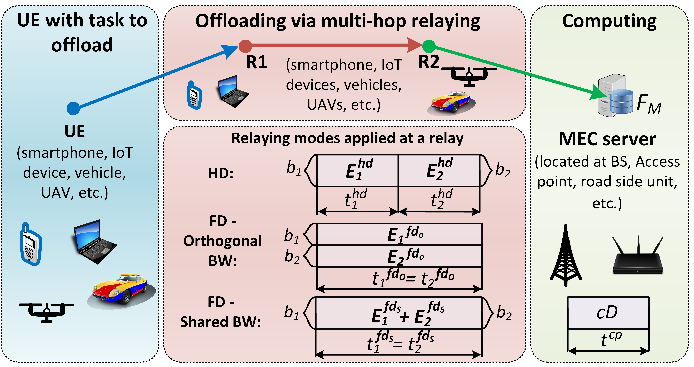}
\caption{System model with one UE having a task to offload via multi-hop relaying and computed at MEC server. } 
\label{Fig01}
\end{figure}
The capacity at the $n$-th hop is:
\begin{equation}
C_{n}^{hd}= b_nlog_2 \left(1+\frac{p_{n} g_{n}}{b_n\left(\sigma+I_b \right)}\right),
\label{C1_HD} 
\end{equation}
where $b_n$, $p_n$, $g_n$ are the allocated bandwidth, transmission power, and channel gain at the $n$-th hop, respectively, $\sigma$ is the noise spectral density, and $I_b$ is the background interference from other UEs in the \textcolor{black}{neighboring cells, as in real-world scenarios, where such interference is usually present.} 

The communication delay in the HD is composed of the delays at individual hops ($t_{1}^{hd}$, $t_{2}^{hd}$), and is expressed as:
\begin{equation}
t^{hd}=t_{1}^{hd}+t_{2}^{hd}=\sum\nolimits_n t_{n}^{hd}=D\sum\nolimits_n \frac{1}{C_{n}^{hd}},
\label{t_HD} 
\end{equation}
where $D$ is the size of the task offloaded by the UE. Similarly as in many works (see, e.g., \cite{2023_TVT}), we neglect the delivery of the computing results back to the UE, as it is insignificant with respect to the whole communication delay.

The sum energy consumed to forward the task in HD is composed of the energies consumed at individual hops ($E^{hd}_n$):
\begin{eqnarray}
E^{hd}=E_{1}^{hd}+E_{2}^{hd}=\sum\nolimits_n E_n^{hd}=\sum\nolimits_n t_{n}^{hd}p_{n}.
\label{E_HD}
\end{eqnarray}

\textcolor{black}{Note that we consider only energy consumption caused by transmission of tasks while a circuit power consumption of the devices is not assumed, as the circuit power consumption is constant and does not change due to offloading.}
 
\subsubsection{FD--Orthogonal}
In this mode, the relay receives and transmits the data simultaneously, but transmission at both hops are orthogonal in frequency domain (see Fig. \ref{Fig01}) to avoid SI. Thus, we assume orthogonal bandwidth $b_1$ and $b_2$ at the first and second hops, respectively (we derive the optimal bandwidth allocation later in the paper). Then, the capacity $C_n^{fd_o}$ at the $n$-th hop is expressed as in (\ref{C1_HD}). 

Since the propagation delay and time for processing of communication at the relay (both jointly denoted as $\epsilon$) are very short (scale of $\mu$s or ms) compared to the offloading time (hundreds of ms or seconds), these can be neglected without breaking a relaying causality and we can assume $t_1^{fd_o}=t_2^{fd_o}+\epsilon \approx t_1^{fd_o}$ for $\epsilon<<$ overall offloading time. In practice, the whole offloading task is transmitted in a series of many smaller transport blocks (in scale of ms in 5G) and each block can be forwarded right after its reception and processing by the relay, hence, fulfilling the relaying causality principle for the whole task. Thus, the communication delay is:
\begin{equation}
t^{fd_o}=t_1^{fd_o}=t_2^{fd_o}=\text{max}(D/C_{1}^{fd_o},D/C_{2}^{fd_o}),
\label{t_FD1} 
\end{equation}

The energy consumption required to relay the task in FD with orthogonal bandwidth is calculated as:
\begin{eqnarray}
E^{fd_o}=E_{1}^{fd_o}+E_{2}^{fd_o}=\sum\nolimits_n E_n^{fd_o}=\sum\nolimits_n t_{n}^{fd_o}p_{n}.
\label{E_FD1}
\end{eqnarray}
 
\subsubsection{FD--Shared}
Similarly as in the previous FD case, the relays can receive and transmit data simultaneously (assuming $\epsilon<<$ offloading time as explained for \emph{FD-Orthogonal bandwidth}). The fundamental difference is, however, that the transmissions at both hops share the same bandwidth. Then, the capacities at each hop are:
\begin{equation}
C_{1}^{fd_s}= b_1 log_2 \left(1+\frac{p_{1} g_{1}}{b_1\left(\sigma+I_b \right)+p_{2}g_{1,1}}\right),
\label{C1_FD2} 
\end{equation}
\begin{equation}
C_{2}^{fd_s}=b_2 log_2\left(1+\frac{p_{2} g_{2}}{b_2\left(\sigma +I_b\right)+p_{1}g_{1,2}}\right),
\label{C2_FD2} 
\end{equation}
where $g_{1,1}$ is the channel gain between the transmitter and the receiver of the relay, thus, $p_{2}g_{1,1}$ in (\ref{C1_FD2}) represents the SI in FD \cite{Wang2018_JSAC}; and $g_{1,2}$ is the channel gain between the transmitter at the first hop and the receiver at the second hop and $p_{1}g_{1,2}$ representing interference in (\ref{C2_FD2}) from the former to the latter.

The communication delay is analogous to FD with orthogonal bandwidth, i.e.,:
\begin{equation}
t^{fd_s}=t_1^{fd_s}=t_2^{fd_s}=\text{max}(D/C_{1}^{fd_s},D/C_{2}^{fd_s}).
\label{t_FD2} 
\end{equation}
The energy consumption in this relaying mode is defined as:
\begin{eqnarray}
E^{fd_s}=E_{1}^{fd_s}+E_{2}^{fd_s}=\sum\nolimits_n E_n^{fd_s}=\sum\nolimits_n t_{n}^{fd_s}p_{n}.
\label{E_FD2}
\end{eqnarray}
\subsection{Computing model}
We focus on the offloading of tasks to the MEC server that is able to process $F_M$ central processing unit (CPU) cycles per second. Let $c$ is the average number of CPU cycles to process one bit of the task \cite{Fang2022_IoT}. Then, we express the computing delay as:
\begin{equation}
t^{cp}=cD/F_M.
\label{Eq9} 
\end{equation}

\section{Problem formulation}\label{III}
We formulate a resource allocation problem to minimize the sum energy consumption for the offloading of the task from the UE over $N$ (in this paper $N=3$) hops as both the UE and relays are assumed to be energy-constrained while the task processing time meets the maximum required processing time $T_{max}$. This is achieved by optimization of time slots $\boldsymbol{\mathcal{T}}$, transmission power $\boldsymbol{\mathcal{P}}$, and bandwidth allocation $\boldsymbol{\mathcal{B}}$ at individual hops. Hence, the problem is formulated as: 
\begin{eqnarray}
\begin{aligned}
& \boldsymbol{\mathcal{T,P,B}}=
& &\underset{t_n,p_n,b_n}{\text{argmin}} \sum\nolimits_n E_{n}\\
& ~~\text{s.t.}
& & \text{(a)}~ \sum\nolimits_n t_n \leq T_{max}-t^{cp}\\
&
& & \text{(b)}~ t_n>0, \forall n \\
&
& & \text{(c)}~ p_n \leq P_{max}, \forall n \\
&
& & \text{(d)}~ b_n \leq B_{max}, \forall n \\
\end{aligned}
\label{P0}
\end{eqnarray}
where (\ref{P0}a) ensures that task is processed within $T_{max}$, (\ref{P0}b) ensures that each time slot is positive, (\ref{P0}c) limits the transmission power at each hop to $P_{max}$, and (\ref{P0}d) guarantees bandwidth at any does not exceed $B_{max}$.     

\section{Optimization of multi-hop relaying}\label{IV}
In this section, we present the proposed relaying and its optimization. We distinguish three multi-hop relaying cases: \emph{i}) both relays uses HD (labeled as \emph{HD+HD}), \emph{ii}) one relay use \emph{HD} while \emph{FD--Orthogonal} is employed by the other relay (\emph{HD+FD--Orthogonal}), and \emph{iii}) \emph{HD} is exploited by the first relay while \emph{FD--Shared} is used at the second relay (\emph{HD+FD--Shared}). Note that the relaying modes at R1 and R2 can be switched for \emph{ii}) and \emph{iii}) with no impact on derivations presented in the paper. 
We optimize \emph{i})-\emph{iii}) in the following subsections.

\subsection{HD+HD relaying case}\label{IVa}
If both relays employ HD, the task offloading is done during three consecutive time slots (see Fig. \ref{Fig02}). The task is sent first by the UE to the R1 within $t_{1}^{hd}$, then relayed by the R1 to the R2 during $t_{2}^{hd}$, and finally delivered from the R2 to the MEC server in $t_{3}^{hd}$. \textcolor{black}{The benefit of this relaying case is no interference to cope with (such as SI) and relays may support only less complex HD relaying.}

To optimize $\boldsymbol{\mathcal{P}}$ in (\ref{P0}), we express $p_n$ from (\ref{C1_HD}) as a function of $t_n^{hd}$ while assuming $C_n^{hd}=D/t_n^{hd}$ (see (\ref{t_HD})), i.e.:
 \begin{equation}
p_{n}=\frac{K_n}{g_{n}}\left(2^{\frac{D}{t_n^{hd}b_n}}-1\right),
\label{P_HD} 
\end{equation}
where $K_{n}=b_n\left(\sigma+I_b\right)$. Then, the sum energy consumption over all hops is expressed as: 
\begin{multline}
\sum_n E_{n}=\sum_n t_{n}^{hd}p_{n}=\sum_n \frac{t_{n}^{hd}K_n}{g_{n}}\left(2^{\frac{D}{t_{n}^{hd}b_n}}-1\right).
\label{E_HDHD}
\end{multline} 

To optimize $\boldsymbol{\mathcal{B}}$ in (\ref{P0}), since the first derivative of $E_n^{hd}$ with respect to $b_n$ is decreasing with increasing $b_n$, $E_n^{hd}$ at any $n$-th hop is minimized if $b_n=B_{max}$. 
Thus, we can rewrite (\ref{P0}) for optimizing \emph{HD+HD} case as:
\begin{eqnarray}
\begin{aligned}
& \boldsymbol{\mathcal{T}}=
& &\underset{t_{n}^{HD}}{\text{argmin}} \sum\nolimits_n \frac{K_nt_{n}^{hd}}{g_{n}}\left(2^{\frac{D}{t_{n}^{hd}b_n}}-1\right)\\
& ~~\text{s.t.}
& & \text{(\ref{P0}a)}-\text{(\ref{P0}c)}\\
&
& & \text{(d)}~ b_n=B_{max}, \forall n \\
\end{aligned}
\label{P1}
\end{eqnarray}
where (\ref{P1}d) ensures that whole bandwidth is used at all hops. 
\begin{figure}[t!]
\centering
\includegraphics[scale=1]{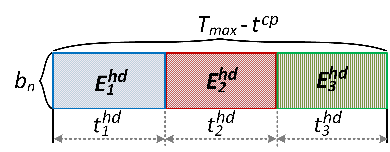}
\caption{Optimization of HD+HD by setting of each $t_{n}^{hd}$.} 
\label{Fig02}
\end{figure}

\begin{lemma}
The optimization problem in (\ref{P1}) and all its constraints are convex with respect to $\boldsymbol{\mathcal{T}}$.
\end{lemma}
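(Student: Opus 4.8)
The plan is to establish convexity of the objective in (\ref{P1}) by showing that each summand is a convex function of the corresponding $t_n^{hd}$, and then verifying that the feasible set carved out by the constraints is convex. Since the objective is separable across hops (each term depends only on its own $t_n^{hd}$, with $b_n=B_{max}$ fixed by (\ref{P1}d)), it suffices to analyze a single generic term $f(t)=\frac{K_n}{g_n}\,t\big(2^{D/(tB_{max})}-1\big)$ on the domain $t>0$. First I would factor out the positive constant $K_n/g_n$ and focus on $h(t)=t\big(2^{a/t}-1\big)$ with $a=D/B_{max}>0$.

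The core step is to compute $h''(t)$ and show it is nonnegative for all $t>0$. I would write $2^{a/t}=e^{(a\ln 2)/t}$ and let $u=u(t)=(a\ln 2)/t$, so $h(t)=t(e^{u}-1)$. Differentiating, $h'(t)=(e^u-1)+t\,e^u\,u'$ where $u'=-u/t$, giving $h'(t)=e^u-1-u e^u$. Differentiating once more, $h''(t)=e^u u' - u' e^u - u e^u u' = -u' u e^u = \frac{u^2}{t}e^u \ge 0$, since $u'=-u/t$ and $u>0$, $t>0$. This confirms $h$ is convex on $t>0$, hence each $\frac{K_n t_n^{hd}}{g_n}\big(2^{D/(t_n^{hd}B_{max})}-1\big)$ is convex in $t_n^{hd}$, and the sum is convex in $\boldsymbol{\mathcal{T}}$. (Alternatively, one can invoke the standard fact that the perspective function $t\,\phi(x/t)$ of a convex $\phi$ is jointly convex; here $\phi(y)=2^y-1$ is convex and we restrict to the line $x=a$, but the direct second-derivative computation is cleaner to present.)

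Next I would check the constraints. Constraint (\ref{P0}a), $\sum_n t_n \le T_{max}-t^{cp}$, is linear in $\boldsymbol{\mathcal{T}}$ (note $t^{cp}=cD/F_M$ is a constant independent of the optimization variables), hence defines a halfspace; (\ref{P0}b), $t_n>0$, is a linear strict inequality defining an open halfspace; and (\ref{P1}d) fixes $b_n=B_{max}$, which is just a constant substitution and imposes no constraint on $\boldsymbol{\mathcal{T}}$. The only constraint needing care is (\ref{P0}c), $p_n\le P_{max}$: substituting (\ref{P_HD}) with $b_n=B_{max}$, this reads $\frac{K_n}{g_n}\big(2^{D/(t_n^{hd}B_{max})}-1\big)\le P_{max}$. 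The left-hand side is a decreasing function of $t_n^{hd}$ (the exponent $D/(t_n^{hd}B_{max})$ decreases in $t_n^{hd}$), so the constraint is equivalent to $t_n^{hd}\ge t_n^{\min}$ for some threshold $t_n^{\min}>0$, i.e. another linear inequality. Thus every constraint is convex, and combined with the convex separable objective, problem (\ref{P1}) is a convex program in $\boldsymbol{\mathcal{T}}$.

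The main obstacle is the second-derivative sign check for $h(t)=t(2^{a/t}-1)$: the function is a product of a linear term and an exponential of $1/t$, so a naive differentiation produces several terms and it is not visually obvious they combine to something nonnegative. The substitution $u=(a\ln 2)/t$ is what makes the cancellation transparent — after it, $h''(t)=\frac{u^2}{t}e^u$ is manifestly positive. I would make sure to present that substitution up front so the computation does not balloon. Everything else (linearity of the time-budget and power constraints after inverting the monotone capacity relation) is routine.
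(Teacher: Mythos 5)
Your proposal is correct and follows essentially the same route as the paper: since the objective is separable with $b_n=B_{max}$ fixed, your scalar second-derivative computation $h''(t)=\frac{u^2}{t}e^{u}>0$ is exactly the positivity of the diagonal Hessian entries the paper exhibits, and your treatment of the constraints — linearity of the time budget and positivity constraints, plus inverting the monotone power relation $p_n\le P_{max}$ into a linear lower bound $t_n^{hd}\ge t_n^{\min}$ — mirrors the paper's argument. No gaps.
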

\begin{proof}
The Hessian matrix $H$ corresponding to the objective function in (\ref{P1}) is: 

\begin{eqnarray}
\large{H=\begin{bmatrix}L\frac{2^{\frac{D}{t_{1}^{hd}B_{max}}}}{{t_{1}^{hd}}^3g_{1}} & 0 & 0\\ 0 & L\frac{2^{\frac{D}{t_{2}^{hd}B_{max}}}}{{t_{2}^{hd}}^3g_{2}} & 0 \\ 0& 0& L\frac{2^{\frac{D}{t_{3}^{hd}B_{max}}}}{{t_{3}^{hd}}^3g_{3}}\end{bmatrix}}
\end{eqnarray}

\noindent
where $L=(\sigma+I_b)D^2\text{ln}^22/B_{max}$. The entries on the main diagonal of $H$ are positive for $t_{1}^{hd}>0$,  $t_{2}^{hd}>0$, and $t_{3}^{hd}>0$. Since the diagonal matrix $H$ is positive definite, the objective function in (\ref{P1}) is convex.

Further, the constraints (\ref{P0}a), (\ref{P0}b), and (\ref{P1}d) are linear, thus, also convex. Last, using (\ref{C1_HD}) while considering $t_{n}^{hd}=D/C_{n}^{hd}$ (see (\ref{t_HD})), any $p_n$ in (\ref{P0}c) can be rewritten as:

\vspace{-.20cm}
{\small\begin{eqnarray}
t_{n}^{hd}\geq \frac{D}{B_{max}\log_2(1+\frac{g_{n}P_{max}}{K_n})},
\end{eqnarray}}
which is convex (linear) with respect to any $t_{n}^{hd}>0$.
\end{proof}
Since the optimization problem in (\ref{P1}) and all its constraints are convex, any convex optimization method can be used to solve it optimally. We have adopted CVX \cite{CVX}. 

\subsection{HD+FD--Orthogonal relaying case}\label{IVb}
The second case is the combination of \emph{HD} (used by R1) and \emph{FD--Orthogonal} (used by R2), see Fig. \ref{Fig03}. Thus, the task is first sent to R1 during $t_{1}^{hd}$ using $b_1$. Then, the task is simultaneously sent from R1 to R2 and from R2 to MEC server during $t_{2}^{fd_o}=t_{3}^{fd_o}$ (neglecting $\epsilon$ as explained in Section II.B) using $b_2$ and $b_3$, respectively. Note that relaying modes can be switched at R1 and R2. \textcolor{black}{Similarly as in \textit{HD+HD relaying} case, the advantage of \textit{HD+FD--Orthogonal} case is no SI due to relaying, but devices supporting FD relaying have to be employed.}

Like for \emph{HD}, we express $p_n$ as in (\ref{P_HD}) for all hops as the function of time to solve $\boldsymbol{\mathcal{P}}$ in (\ref{P0}). Then, the sum energy consumption is expressed as: 
\begin{multline}
\sum_n E_{n}=E_1^{hd}+E_2^{fd_o}+E_3^{fd_o}= \frac{t_{1}^{hd}K_1}{g_{1}}\left(2^{\frac{D}{t_{1}^{hd}b_1}}-1\right)+\\+\frac{t_{2}^{fd_o}K_2}{g_{2}}\left(2^{\frac{D}{t_{2}^{fd_o}b_2}}-1\right)+\frac{t_{3}^{fd_o}K_3}{g_{3}}\left(2^{\frac{D}{t_{3}^{fd_o}b_3}}-1\right).
\label{E_HDFD1}
\end{multline}

Further, like for \emph{HD}, we can assume $b_1=B_{max}$, since this minimize $E_1^{hd}$. Then, we can reformulate (\ref{P0}) as: 
\begin{eqnarray}
\begin{aligned}
& \boldsymbol{\mathcal{T,B}}=
& &\underset{t_{1}^{hd},t_{2}^{fd_o},b_{2},b_{3}}{\text{argmin}} \left(E_1^{hd}+E_2^{fd_o}+E_3^{fd_o}\right)\\
& ~~\text{s.t.}
& & \text{(a)}~ t_{1}^{hd}+t_{2}^{fd_o} \leq T_{max}-t^{cp}\\
&
& & \text{(b)}~ t_{1}^{hd}>0, t_{2}^{fd_o}>0 \\
&
& & \text{(c)}~ b_{2}+b_{3} \leq B_{max}\\
&
& & \text{(d)}~ b_{2}>0, b_{3}>0\\
&
& & \text{(e)}~ p_n \leq P_{max}, \forall n \\
\end{aligned}
\label{P2}
\end{eqnarray}
where (\ref{P2}a) and (\ref{P2}b) ensure that $T_{max}$ is not violated and the duration of each time slot is positive, respectively, (\ref{P2}c) assures the sum bandwidth at the second and third hops is at most $B_{max}$, (\ref{P2}d) guarantees bandwidth of $b_2$ and $b_3$ is positive, and (\ref{P2}e) is the same as (\ref{P0}c). 

\begin{figure}[t!]
\centering
\includegraphics[scale=1]{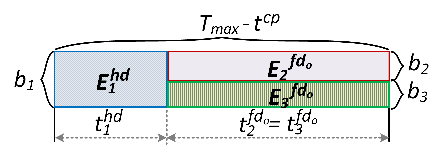}
\caption{Optimization of HD+FD - Orthogonal bandwidth by jointly setting $t_{1}^{hd}$, $t_{2}^{fd_o}$, $b_{2}$, and $b_{3}$.} 
\label{Fig03}
\end{figure}

\begin{lemma}
The optimization problem in (\ref{P2}) and all its constraints are jointly convex with respect to $\boldsymbol{\mathcal{T}}$ and $\boldsymbol{\mathcal{B}}$.
\end{lemma}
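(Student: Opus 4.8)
\emph{Proof plan.} First I would exploit the separable structure of the objective in (\ref{E_HDFD1}): in problem (\ref{P2}) the variable $t_1^{hd}$ appears only in $E_1^{hd}$, $b_2$ only in $E_2^{fd_o}$, $b_3$ only in $E_3^{fd_o}$, and $t_2^{fd_o}$ only in $E_2^{fd_o}$ and $E_3^{fd_o}$ (recall $t_3^{fd_o}=t_2^{fd_o}$). Hence the objective, seen as a function of $(t_1^{hd},t_2^{fd_o},b_2,b_3)$, is a sum of three functions that are, respectively, convex in $t_1^{hd}$, jointly convex in $(t_2^{fd_o},b_2)$, and jointly convex in $(t_2^{fd_o},b_3)$; equivalently, its Hessian is a sum of three positive semidefinite matrices supported on the coordinate $t_1^{hd}$ and on two $2\times 2$ blocks. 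Since a sum of convex functions (PSD matrices) is convex (PSD), it suffices to establish convexity of each piece. For $E_1^{hd}$ this is immediate: with $b_1=B_{max}$ it is exactly the first diagonal entry of the Hessian computed in the proof of Lemma~1, which is positive for $t_1^{hd}>0$; as in Lemma~1, fixing $b_1=B_{max}$ is without loss since it minimizes $E_1^{hd}$.

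The core step is the joint convexity of $E_n^{fd_o}$ for $n=2,3$. Since $K_n=b_n(\sigma+I_b)$, one has $E_n^{fd_o}=\frac{\sigma+I_b}{g_n}\,tb\,(2^{D/(tb)}-1)$ with $(t,b)=(t_2^{fd_o},b_n)$, and joint convexity does \emph{not} follow from one-dimensional convexity in the product $tb$, because $tb$ is bilinear rather than affine. I would handle this by writing $tb\,(2^{D/(tb)}-1)=F(1/(tb))$ with $F(u)=(2^{Du}-1)/u$, and then composing two facts: the inner map $(t,b)\mapsto 1/(tb)=\exp(-\ln t-\ln b)$ is jointly convex on the positive orthant, being the composition of the increasing convex $\exp$ with a convex function; and the outer $F$ is convex and nondecreasing on $(0,\infty)$, since its power series $F(u)=\sum_{k\ge 1}\frac{(D\ln 2)^k}{k!}u^{k-1}$ has only nonnegative, nondecreasing, convex monomial terms. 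A convex nondecreasing function of a convex function is convex, which yields joint convexity of $E_n^{fd_o}$. (A direct alternative is to compute the $2\times 2$ Hessian of $E_n^{fd_o}$ and verify positive diagonal and nonnegative determinant; the latter reduces to the scalar inequalities $e^{y}(1-y)\le 1$ and $e^{y}(2y^2-y+1)\ge 1$ for $y\ge 0$.)

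It then remains to check the constraints. Constraints (\ref{P2}a)--(\ref{P2}d) are affine in $(t_1^{hd},t_2^{fd_o},b_2,b_3)$, hence convex. For (\ref{P2}e), inverting the capacity relation of (\ref{C1_HD}) as in Lemma~1 turns $p_1\le P_{max}$ into the affine lower bound $t_1^{hd}\ge D/\big(B_{max}\log_2(1+g_1P_{max}/K_1)\big)$. For $n=2,3$, the constraint $p_n\le P_{max}$ becomes $\frac{b_n(\sigma+I_b)}{g_n}\big(2^{D/(t_2^{fd_o}b_n)}-1\big)\le P_{max}$; up to the positive factor $(\sigma+I_b)/g_n$, the left-hand side is the value at $x=\frac{D\ln 2}{t_2^{fd_o}}$ of the perspective $b_n\big(e^{x/b_n}-1\big)$ of the convex function $x\mapsto e^{x}-1$. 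That perspective is jointly convex in $(x,b_n)$ and nondecreasing in $x$, and $x=\frac{D\ln 2}{t_2^{fd_o}}$ is convex in $t_2^{fd_o}$, so the left-hand side is jointly convex in $(t_2^{fd_o},b_n)$ and the constraint cuts out a convex set. Combining, the objective and every constraint of (\ref{P2}) are jointly convex in $\boldsymbol{\mathcal{T}}$ and $\boldsymbol{\mathcal{B}}$, proving the lemma.

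I expect the main obstacle to be the joint-convexity step for $E_n^{fd_o}$: because $tb$ is a genuine bilinear form, convexity of $2^{D/(tb)}-1$ as a scalar function of $tb$ does not transfer, and one must control the cross derivative $\partial^2 E_n^{fd_o}/\partial t\,\partial b$ either through the perspective/composition argument above or through the explicit Hessian inequalities. The constraints and the $E_1^{hd}$ term, by contrast, are routine.
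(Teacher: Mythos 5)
Your proof is correct, and it reaches the lemma by a genuinely different route than the paper. For the key step --- joint convexity of $E_n^{fd_o}$ in $(t_2^{fd_o},b_n)$ --- the paper expands the quadratic form $v^{T}Hv$ of the $2\times 2$ Hessian explicitly and reduces positive (semi)definiteness, via a Cauchy--Schwarz bound on the cross term, to a single scalar inequality in the product $t_2^{fd_o}b_n$, which it then verifies; you instead avoid the Hessian altogether by writing $tb\,(2^{D/(tb)}-1)=F\!\left(1/(tb)\right)$ with $F(u)=(2^{Du}-1)/u$ convex and nondecreasing (power-series argument) and $1/(tb)=\exp(-\ln t-\ln b)$ jointly convex, so convexity follows from the standard composition rule; your fallback Hessian route with the scalar inequalities $e^{y}(1-y)\le 1$ and $e^{y}(2y^{2}-y+1)\ge 1$ for $y\ge 0$ is essentially a cleaner version of what the paper computes (and, unlike the paper's justification of its final inequality, does not rely on the sign of $2\ln^{2}2-t_2^{fd_o}b_2\ln 2$, which need not be positive for all feasible values). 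You are also more careful with constraint (\ref{P2}e): the paper dismisses all constraints as ``convex (linear),'' but for $n=2,3$ the power bound $\frac{b_n(\sigma+I_b)}{g_n}\bigl(2^{D/(t_2^{fd_o}b_n)}-1\bigr)\le P_{max}$ is not linear in $(t_2^{fd_o},b_n)$; your perspective-function argument (the perspective of $e^{x}-1$ composed with the convex map $t\mapsto D\ln 2/t$) is what is actually needed to show that this constraint defines a convex set, so your treatment both matches the lemma's claim and fills in a step the paper glosses over. What the paper's approach buys is a self-contained elementary verification; what yours buys is brevity, reuse of standard composition/perspective rules, and coverage of the nonlinear power constraints.
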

\begin{proof}
Similar as in the proof to Lemma 1, the first term  in (\ref{E_HDFD1}) is convex with respect to $t_{1}^{hd}$. To show that the second term in (\ref{E_HDFD1}) is also convex, we prove that for any non-zero $v_1,v_2\in R$ we have:
\begin{gather}
\begin{bmatrix}v_1 &v_2\end{bmatrix}\begin{bmatrix} H_{11} & H_{12} \\H_{21}&H_{22} \end{bmatrix}\begin{bmatrix} v_1\\v_2 \end{bmatrix}>0,
	\label{E_FD1_convex_step1}
\end{gather}

\noindent where $\begin{bmatrix} H_{11} & H_{12} \\H_{21}&H_{22} \end{bmatrix}$ is the Hessian matrix for $E_{2}^{fd_o}$ in (\ref{E_HDFD1}) with respect to included variables $t_{2}^{fd_o}$ and $b_2$. To this end, the left-hand side in (\ref{E_FD1_convex_step1}) is first expanded and rewritten as:
\begin{gather}
\frac{\sigma+I_b}{g_{2}}({b_2}^2v_1^2D^2\text{ln}^2(2)\alpha+{t_{2}^{fd_o}}^2v_2^2D^2\text{ln}^2(2)\alpha+\nonumber\\2v_1v_2\big((t_{2}^{fd_o}b_2)^3(\alpha-1)-t_{2}^{fd_o}b_2\alpha D\text{ln}(2)\times\nonumber\\(t_{2}^{fd_o}b_2-D\text{ln}(2))\big))>0, 	\label{E_FD1_convex_step2}
\end{gather} 
\noindent where $\alpha=2^{\frac{D}{t_{2}^{fd_o}b_2}}$. To prove (\ref{E_FD1_convex_step2}) for any non-zero $v_1,v_2$, according to the Cauchy-Schwarz inequality, it is sufficient to prove that: 
\begin{gather}\label{E_FD1_convex_step3}
	(t_{2}^{fd_o}b_2)^3(2^{\frac{D}{t_{2}^{fd_o}b_2}}-1)-t_{2}^{fd_o}b_22^{\frac{D}{t_{2}^{fd_o}b_2}}\text{ln}(2)\times\nonumber\\(t_{2}^{fd_o}b_2-\text{ln}(2))\geq -t_{2}^{fd_o}b_2\text{ln}^2(2)2^{\frac{D}{t_{2}^{fd_o}b_2}},
\end{gather} 
\noindent or equivalently:
\begin{gather}\label{E_FD1_convex_step4}
	2^{\frac{D}{t_{2}^{fd_o}b_2}}((t_{2}^{fd_o}b_2)^2+2\text{ln}^2(2)-t_{2}^{fd_o}b_2\text{ln}(2))\geq (t_{2}^{fd_o}b_2)^2.
\end{gather} 
\noindent The inequality in (\ref{E_FD1_convex_step4}) always holds since $2^{\frac{D}{t_{2}^{fd_o}b_2}}>1$ and $2\text{ln}^2(2)-t_{2}^{fd_o}b_2\text{ln}(2)>0$. Hence, the Hessian matrix is positive definite. Similarly, the Hessian matrix for $E_{3}^{fd_o}$ in (\ref{E_HDFD1}) is positive definite with respect to the variables  $t_{3}^{fd_o}$ and $b_3$. Last, the constraints (\ref{P2}a)--(\ref{P2}e) are all convex (linear) with respect to the optimization variables.
\end{proof}

Due to convexity of (\ref{P2}) and all its constraints, we can again use CVX as in Section IV.A.

\subsection{HD+FD--Shared relaying case}\label{IVc}
The last case is the one combining \emph{HD} and \emph{FD--Shared} (see Fig. \ref{Fig04}). The offloading follows the same principle as in \emph{HD+FD--Orthogonal}, but the transmissions at the second and third hops overlap also in frequency \textcolor{black}{resulting in a more efficient utilization of communication resources compared to the previous two relaying cases. Still,} the energy consumption at the second hop is affected by SI (see (\ref{C1_FD2})) while the energy consumption at the third hop (at the MEC server) is impacted by interference from R1 (see (\ref{C2_FD2})). Hence, to optimize $\boldsymbol{\mathcal{P}}$ in (\ref{P0}), the transmission power of R1 (i.e., $p_{2}$) is expressed from (\ref{C1_FD2}) while substituting $\{1,2\} \Rightarrow \{2,3\}$ as:
\begin{eqnarray}
p_{2}=\frac{K_2+p_{3}g_{2,2}}{g_{2}}\left(2^{\frac{D}{t_2^{fd_s}b_2}}-1\right).
\label{p1_FD2}
\end{eqnarray} 
Similarly, the transmission power of R2 (i.e., $p_{3}$) is calculated from (\ref{C2_FD2}) assuming $t_2^{fd_s}=t_3^{fd_s}$ and $K_2=K_3$ (since $b_2=b_3$, see Fig. \ref{Fig04}), as:
\begin{eqnarray}
p_{3}=\frac{K_2+p_{2}g_{2,3}}{g_{3}}\left(2^{\frac{D}{t_2^{fd_s}b_2}}-1\right).
\label{p2_FD2}
\end{eqnarray}

\noindent Next, we solve the system of equations formed by (\ref{p1_FD2}) and (\ref{p2_FD2}) in order to express $p_{2}$ and $p_{3}$ independently from each other and only in terms of the other parameters as follows:
\begin{eqnarray}
	p_{2}=\frac{K_2\Gamma}{g_{2}(1-\beta\Gamma^2)}\left(1+\frac{g_{2,2}\Gamma}{g_{3}}\right),\nonumber\\
	p_{3}=\frac{K_2\Gamma}{g_{3}(1-\beta\Gamma^2)}\left(1+\frac{g_{2,3}\Gamma}{g_{2}}\right),
	\label{pr1r2&pm_FD2}
\end{eqnarray}
\noindent where $\beta=\frac{g_{2,3}g_{2,2}}{g_{2}g_{3}}$, $\Gamma=2^{\frac{D}{t_{2}^{fd_s}b_2}}-1$.

Then, the energy consumption of this proposed relaying case is expressed as:
\begin{multline}
\sum_n E_{n}=E_1^{hd}+E_2^{fd_s}+E_3^{fd_s}= \frac{t_{1}^{hd}K_1}{g_{1}}\left(2^{\frac{D}{t_{1}^{hd}b_1}}-1\right)+\\+\frac{t_2^{fd_s}K_2\Gamma}{g_{2}(1-\beta\Gamma^2)}\left(1+\frac{g_{2,2}\Gamma}{g_{3}}\right)+\frac{t_2^{fd_s}K_2\Gamma}{g_{3}(1-\beta\Gamma^2)}\left(1+\frac{g_{2,3}\Gamma}{g_{2}}\right).
\label{E_HDFD2}
\end{multline}

Since the whole $B_{max}$ is used at all hops, as this minimizes the energy consumption (as explained in Section IV.A), the optimization problem in (\ref{P0}) can be formulated as follows:

\begin{eqnarray}
\begin{aligned}
& \boldsymbol{\mathcal{T}}=
& &\underset{t_{1}^{hd},t_{2}^{fd_s}}{\text{argmin}} \left(E_{1}^{hd}+E_{2}^{fd_s}+E_{3}^{fd_s}\right)\\
& ~~\text{s.t.}
& & \text{(a)}~ t_{1}^{hd}+t_{2}^{fd_s} \leq T_{max}-t^{cp}\\
&
& & \text{(b)}~ t_{1}^{hd}>0, t_{2}^{fd_s}>0 \\
&
& & \text{(\ref{P0}c)}, \text{(\ref{P1}d)}  \\
\end{aligned}
\label{P3}
\end{eqnarray}
where the constraints are analogous to those in (\ref{P1}) and (\ref{P2}). 
\begin{figure}[t!]
\centering
\includegraphics[scale=1]{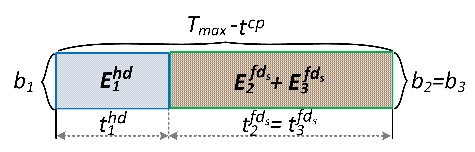}
\caption{Optimization of HD+FD - Shared bandwidth by setting $t_{1}^{hd}$ and $t_{2}^{fd_s}$.} 
\label{Fig04}
\end{figure}

\begin{lemma}
The optimization problem in (\ref{P3}) and all its constraints are convex with respect to $\boldsymbol{\mathcal{T}}$.
\end{lemma}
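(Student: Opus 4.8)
The plan is to use the fact that, once the bandwidths are fixed to $B_{max}$, the objective in (\ref{P3}) is \emph{separable}: the term $E_1^{hd}$ depends only on $t_1^{hd}$ while $E_2^{fd_s}+E_3^{fd_s}$ depends only on $t_2^{fd_s}$. Consequently the Hessian with respect to $\boldsymbol{\mathcal{T}}=(t_1^{hd},t_2^{fd_s})$ is diagonal, and it suffices to prove that each of the two single-variable functions is convex. Convexity of $E_1^{hd}$ in $t_1^{hd}$ is immediate from the argument already used for Lemma~1 (it is the first diagonal entry of the Hessian there, which is positive for $t_1^{hd}>0$), so the effort concentrates on $h(t):=E_2^{fd_s}(t)+E_3^{fd_s}(t)$ with $t:=t_2^{fd_s}>0$.

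First I would bring $h$ to a compact form. Adding the two closed-form powers in (\ref{pr1r2&pm_FD2}) gives $h(t)=t\,(p_2+p_3)=t\,\psi(\Gamma)$, where $\Gamma=2^{D/(tB_{max})}-1$ and $\psi(w)=K_2\,(a_1 w+a_2 w^2)/(1-\beta w^2)$ with $a_1=(g_2+g_3)/(g_2 g_3)>0$ and $a_2=(g_{2,2}+g_{2,3})/(g_2 g_3)\ge 0$. On the feasible range $0\le w<1/\sqrt{\beta}$, that is, where $1-\beta\Gamma^2>0$ so that the powers are finite and positive (which is exactly what constraint (\ref{P0}c) enforces), $\psi$ is a ratio of a nonnegative increasing numerator and a positive decreasing denominator, hence nondecreasing. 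Differentiating twice and clearing denominators, the numerator of $\psi''(w)$ collapses to $2K_2\bigl(a_2+3\beta a_1 w+3\beta a_2 w^2+\beta^2 a_1 w^3\bigr)$, a polynomial in $w$ all of whose coefficients are nonnegative, while the denominator is $(1-\beta w^2)^3>0$; therefore $\psi''(w)\ge 0$ and $\psi$ is convex on $[0,1/\sqrt{\beta})$.

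Next I would assemble the composition. Setting $c:=D/B_{max}$ and $g(s):=\psi(2^{s}-1)$, we have $h(t)=t\,g(c/t)$. Since $s\mapsto 2^{s}-1$ is convex and nonnegative and $\psi$ is convex and nondecreasing, $g$ is convex. Finally $h(t)=t\,g(c/t)$ is a restriction of the perspective of the convex function $g$, hence convex for $t>0$; equivalently one checks directly that $h''(t)=(c^2/t^3)\,g''(c/t)\ge 0$. This yields convexity of the objective of (\ref{P3}). For the constraints: (\ref{P3}a) and (\ref{P3}b) are linear; the power constraint (\ref{P0}c) at the first hop reduces, exactly as in Lemma~1, to a linear lower bound on $t_1^{hd}$; and at the second and third hops $p_2$ and $p_3$ in (\ref{pr1r2&pm_FD2}) are strictly decreasing in $t_2^{fd_s}$ on the feasible range, so each set $\{t_2^{fd_s}:p_n\le P_{max}\}$ is a half-line and therefore convex (alternatively, $p_2$ and $p_3$ are themselves convex in $t_2^{fd_s}$ by the same composition argument with the appropriate coefficients).

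The main obstacle is the second step. Unlike Lemmas~1 and~2, the self-interference and cross-interference coupling in (\ref{C1_FD2})--(\ref{C2_FD2}) turns the per-hop powers into \emph{rational} functions of $\Gamma$ rather than plain exponentials, so convexity no longer follows from a one-line inspection of a diagonal Hessian. The delicate point is the algebraic simplification showing that the numerator of $\psi''$ has only nonnegative coefficients on the feasible set $\beta\Gamma^2<1$ -- in particular that the denominator $1-\beta\Gamma^2$, which a priori could destroy convexity, never does so while the problem is feasible; once that identity is in hand, the composition and perspective steps are routine.
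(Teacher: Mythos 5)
Your proof is correct, but it follows a genuinely different route from the paper. The paper keeps $E_2^{fd_s}$ and $E_3^{fd_s}$ separate and argues term by term via a factorization: it splits each term into factors such as $t_2^{fd_s}K_2\Gamma$, $\tfrac{1}{g_2(1-\beta\Gamma^2)}$ and $\bigl(1+\tfrac{g_{2,2}\Gamma}{g_3}\bigr)$, asserts that each factor is positive, strictly decreasing and convex in $t_2^{fd_s}$, and invokes the rule $(fg)''=f''g+fg''+2f'g'$ for products of positive, decreasing, convex functions. You instead sum the two FD terms into $h(t)=t\,\psi(\Gamma)$ with $\psi$ a rational function of $\Gamma$, prove convexity and monotonicity of $\psi$ by an explicit second-derivative computation whose numerator is a polynomial with nonnegative coefficients on $0\le w<1/\sqrt{\beta}$, and then finish by the composition rule (convex nondecreasing outer with convex inner $2^{s}-1$) plus the perspective transformation $t\mapsto t\,g(c/t)$. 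What your route buys: the key inequality is verified explicitly rather than asserted (the paper never actually proves that $t_2^{fd_s}K_2\Gamma$ or $\tfrac{1}{1-\beta\Gamma^2}$ is convex in $t_2^{fd_s}$), the domain restriction $\beta\Gamma^2<1$ -- which both arguments implicitly need for the positivity of the powers and hence for the factorization/ratio manipulations -- is made explicit, and you treat the nonlinearity of the power constraints $p_2,p_3\le P_{max}$ honestly (monotone decreasing in $t_2^{fd_s}$, hence convex sublevel sets), whereas the paper simply calls all constraints ``convex (linear).'' What the paper's route buys is brevity and symmetry: each FD term is handled by the same one-line product argument without computing any derivatives, and the same template transfers verbatim to $E_3^{fd_s}$. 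One small caveat in your write-up: constraint (\ref{P0}c) alone does not literally force $1-\beta\Gamma^2>0$ (the closed-form expressions become negative past the pole and would formally satisfy $p_n\le P_{max}$); the restriction really comes from the physical requirement $p_n>0$, i.e., from selecting the meaningful branch of the solution of (\ref{p1_FD2})--(\ref{p2_FD2}), an assumption the paper makes implicitly as well, so this is a presentational point rather than a gap.
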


\begin{proof}
First, $E_1^{hd}$ in (\ref{P3}) is convex with respect to $t_{1}^{hd}$, which is the only variable from $\boldsymbol{\mathcal{T}}$ factoring in $E_{1}^{hd}$. This can be proved similarly as shown in the proof to Lemma 1. 

Next, we show the convexity of $E_2^{fd_s}$ in (\ref{P3}). Since the Hessian matrix of $E_2^{fd_s}$ is too complex, we first decompose $E_2^{fd_s}$ into simpler factors and use the fact that, the multiplication of any positive, strictly decreasing, and convex functions is also convex. This fact can be verified via the equation $(fg)''=f''g+fg''+2f'g'$ for positive, strictly decreasing, and convex arbitrary functions $f$ and $g$. Now by  considering the factors $t_2^{fd_s}K_2\Gamma$ and $\frac{1}{g_{2}(1-\beta\Gamma^2)}$ in $E_2^{fd_s}$ taken from (\ref{E_HDFD2}), both factors are positive, strictly decreasing, and convex with respect to $t_{2}^{fd_s}$. Hence, their multiplication, which yields the term $\frac{t_2^{fd_s}K_2\Gamma}{g_{2}(1-\beta\Gamma^2)}$ in (\ref{P3}), is convex. In addition, the multiplication is also positive and strictly decreasing. 
Next, the term $\Gamma$ and hence $\left(1+\frac{g_{2,2}\Gamma}{g_3}\right)$ in (\ref{E_HDFD2}) is strictly decreasing and convex with respect to $t_2^{fd_s}$. Thus, its multiplication with the  term $\frac{t_2^{fd_s}K_2\Gamma}{g_{2}(1-\beta\Gamma^2)}$, which yields $E_{2}^{fd_s}$ in (\ref{P3}), is convex. 

The third term $E_3^{fd_s}$ in (\ref{P3}) is also convex with respect to $t_2^{fd_s}$, as can be proven analogously to the convexity of $E_2^{fd_s}$. 

Last, the constraints in (\ref{P3}) are also convex (linear) with respect to the optimization variables.
\end{proof}

Since the optimization problem in (\ref{P3}) and all its constraints are convex, we solve it optimally by CVX, analogously as we solve (\ref{P1}) and (\ref{P2}) in Section IV.A and Section IV.B, respectively.  

\begin{figure}[b!]
\centering
\includegraphics[scale=0.85]{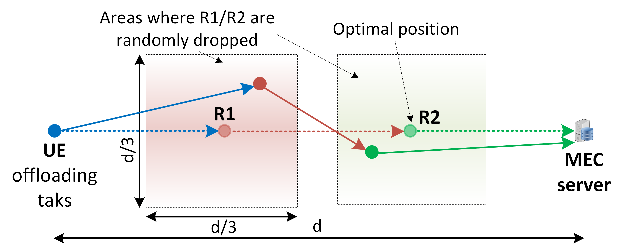}
\caption{Simulation scenario.} 
\label{Fig05}
\end{figure}

\begin{table}[b!]
\footnotesize
\caption{Parameters and settings for simulations}
\label{tab:Tab1}
\centering
\begin{tabular}{|p{1.7cm}|p{1.6cm}|p{1.1cm}|p{2.7cm}| } 
\hline
{\bf Parameter} & {\bf Value} &{\bf Parameter} &{\bf Value} \\
\hline
distance ($d$) & 25-150 m & $I_b$&-150 dBm/Hz \\
\hline
Carrier freq. &2 GHz & $D$&[0.5 2] Mbits \cite{Fang2022_IoT}\\
\hline
$B_{max}$ & 20 MHz & $c$&[1.5 2]x$10^3$ cyc./bit \cite{Fang2022_IoT}\\
\hline
$P_{max}$ & 100 mW &$F_u$ &[0.5 2]x$10^9$ cycles/s \cite{Fang2022_IoT}\\
\hline
$\sigma$& -174 dBm/Hz& $F_M$&40x$10^9$ cycles/s \cite{Fang2022_IoT}\\
\hline
\end{tabular}
\label{tab}
\end{table}

\begin{figure*}[t!]
\centering
\begin{subfigure}[b]{0.3\textwidth}
\includegraphics[width=\textwidth]{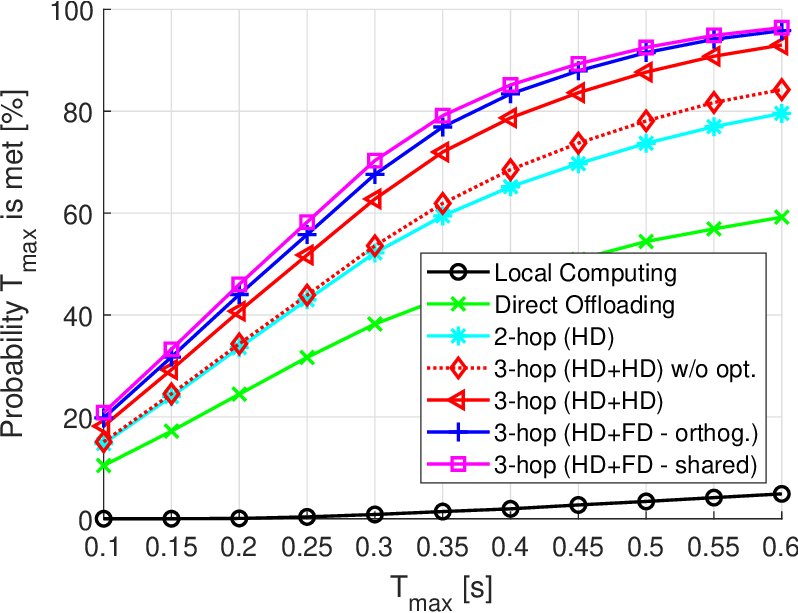}
\caption{}
\label{}
\end{subfigure}
~ 
\begin{subfigure}[b]{0.3\textwidth}
\includegraphics[width=\textwidth]{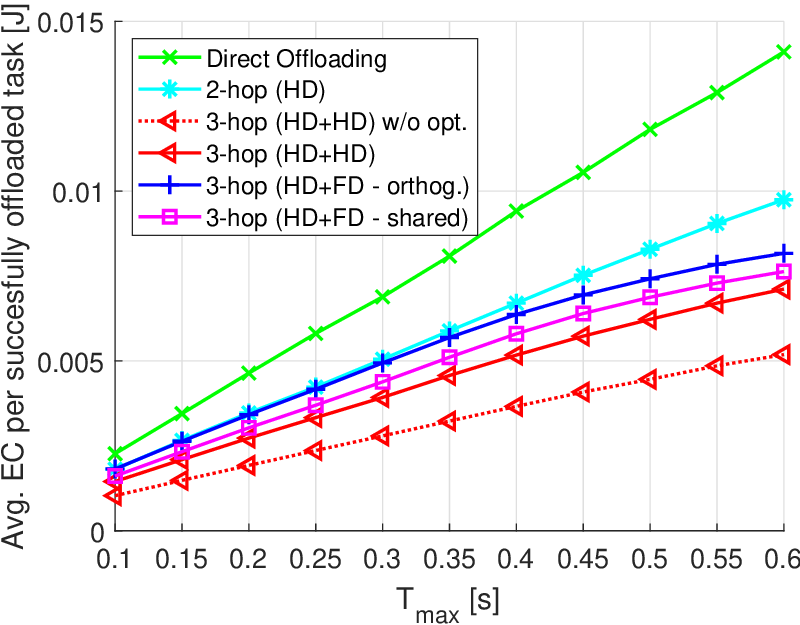}
\caption{}
\label{}
\end{subfigure}
~ 
\begin{subfigure}[b]{0.3\textwidth}
\includegraphics[width=\textwidth]{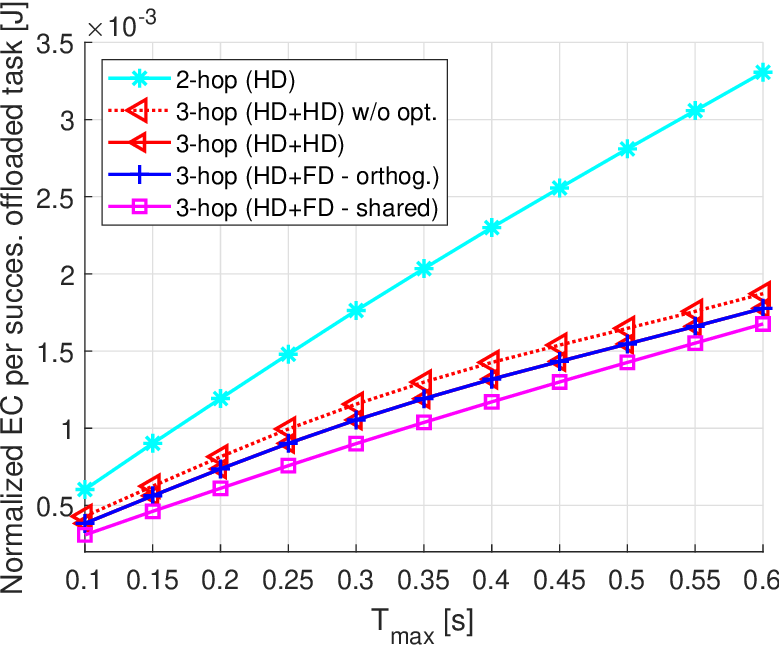}
\caption{}
\label{}
\end{subfigure}
\caption{Effect of $T_{max}$ on: a) probability that $T_{max}$ is met, b) average energy consumption per successfully offloaded task, c) normalized energy consumption when only tasks successfully offloaded within $T_{max}$  by direct offloading are considered.}
\label{Fig06}
\end{figure*}

\section{Performance evaluation}\label{VIII}
In this section, we first describe the simulation models and parameters and then analyze the performance of individual relaying schemes.

\subsection{Simulation models and parameters}
For the performance evaluation, we assume the scenario with one UE offloading tasks via two relays to the MEC server (see Fig. \ref{Fig05}). To get statistically valid results, we average out the results over 200 000 drops. Within each drop, we randomly generate: \emph{i}) tasks parameters $D$ and $c$, \emph{ii}) distance between the UE and the MEC server between 25 and 150 m, and \emph{iii}) positions of the relays within the areas shown in Fig. \ref{Fig05}. This random generation of the relays' positions substitutes the relay selection process and each drop represents a case with relay at random positions. We adopt general modified COST 231 Hata path loss model at 2 GHz. All important simulation parameters are summarized in Table I. 

We compare the results of multi-hop offloading with: \emph{i}) local computing at the UE with computing power ($F_u$) randomly generated in each drop, \emph{ii}) direct offloading to the MEC server without relays, \emph{iii}) offloading via only relay working in HD usually considered in the related state-of-the-art works (denoted as 2-hop (HD)), \emph{iv}) offloading via multi-hop (i.e., two relays) with HD at both relays while relaying is not optimized, as considered in \cite{2023_TVT}-\cite{2023_TNSM} (denoted as 3-hop (HD+HD) w/o opt.)

\subsection{Results}
In Fig. \ref{Fig06}a, we investigate the probability that the tasks are successfully processed within $T_{max}$. Following the intuition, the probability that the tasks are successfully processed increases with $T_{max}$ for all investigated schemes, as there is more available time to process the tasks. 
Since we target tasks with relatively high requirements on computation, the local computing is not efficient with only up to 5\% probability that tasks are processed within $T_{max}$. The direct offloading of tasks to MEC server significantly increases the probability that $T_{max}$ is met (up to 60\%). The introduction of relaying notably improves the performance of the offloading so that 2-hop relaying and 3-hop relaying without optimization leads to the probability of successful task processing within $T_{max}$ up to 79.5\% and 84.2\%, respectively, if $T_{max}=0.6$ s.

Now, let's discuss the probability of successful processing within $T_{max}$ of the proposed optimized multi-hop relaying cases described in Section \ref{IV}. The superior performance is provided by the multi-hop relaying combining \emph{HD} and \emph{FD--shared}, outperforming the direct offloading, 2-hop relaying, and conventional 3-hop relaying without optimization in terms of probability of $T_{max}$ being met by up to 99.7\%, 41.3\%, and 38\%, respectively. Among the proposed multi-relaying approaches, the worst performance is observed for \emph{HD+HD}, as it is less spectrum efficient. Still even this scheme outperforms direct computing, 2-hop relaying, and conventional 3-hop relaying without optimization in terms of probability $T_{max}$ is met by up to 74.3\%, 23.3\%, and 20.4\%, respectively.

In Fig. \ref{Fig06}b, we analyze the average energy consumed per the task successfully processed within $T_{max}$. We demonstrate that the average energy consumption increases with $T_{max}$ since, generally, offloading can take longer, thus consuming more energy. As expected, the highest energy consumption is spent by the direct offloading. The energy consumption is decreased by more than 30\% if 2-hop relaying is introduced. Further significant decrease in the energy consumption (nearly 3 times compared to the direct offloading) is observed for the multi-hop relaying. If the multi-hop relaying is not optimized, it can even consume less energy than the proposed multi-hop relaying cases. This is due to the fact that the optimization of multi-hop relaying allows to accommodate also more demanding tasks within $T_{max}$ that cannot be processed successfully without the proposed optimization. However, these more demanding tasks cost more energy during offloading. As a result, the average energy consumption per successfully offloaded task is increased.

To make the comparison of energy consumed per task fair, in Fig. \ref{Fig06}c, we show the normalized energy consumption, i.e., \textcolor{black}{the energy consumption over only those tasks successfully offloaded by all compared schemes.} Note that the direct offloading is omitted in Fig. \ref{Fig06}c to keep a reasonable scale of the y-axis and also since the energy consumption for the direct offloading is, in fact, the same as in Fig. \ref{Fig06}b. Fig. \ref{Fig06}c demonstrates that the best performance is yielded by proposed multi-hop relaying combining \emph{HD} with \emph{FD--shared} as it decreases the energy consumption when compared to 2-hop and 3-hop relayings without optimization by up to 51.2\% and 28\%, respectively.

\section{Conclusions}\label{VIII}
In this paper, we have focused on offloading of highly computationally demanding tasks to MEC via multi-hop relaying. We have introduced and optimized several relaying cases combining half and full duplex relaying at individual relays. We have demonstrated multi-hop relaying improves the offloading experience while decreases the energy consumption of energy-constrained devices involved in the relaying. This paper is an initial work and a joint optimization of multi-hop relaying and relay selection should be carried out in the future.

\section*{Acknowledgement}
This work was supported by the Czech Science Foundation under the grant no. 23-05646S.

\end{document}